\newcommand{\md}{\ensuremath \textrm{md}}
\newcommand{\dd}{\ensuremath \textrm{dd}}
\newcommand{\palto}{type}
\newcommand{\M}{\ensuremath{\mathcal M}}
\begin{document}

\title{Parameterized Modal Satisfiability}

\author{Antonis Achilleos \and Michael Lampis \and Valia Mitsou}

\institute{Computer Science Department,  \newline Graduate Center, City University of New York, \newline 365 5th Ave New York, NY 10016 USA
\email{antach@corelab.ntua.gr,mlampis@gc.cuny.edu,vmitsou@cs.gc.cuny.edu} }

\maketitle

\begin{abstract}
We investigate the parameterized computational complexity of the
satisfiability problem for modal logic and attempt to pinpoint
relevant structural parameters which cause the problem's
combinatorial explosion, beyond the number of propositional
variables $v$. To this end we study the modality depth, a natural
measure which has appeared in the literature, and show that, even
though modal satisfiability parameterized by $v$ and the modality
depth is FPT, the running time's dependence on the parameters is a
tower of exponentials (unless P=NP). To overcome this limitation we
propose several possible alternative parameters, namely diamond
dimension, box dimension and modal width. We show fixed-parameter
tractability results using these measures where the exponential
dependence on the parameters is much milder than in the case of
modality depth thus leading to FPT algorithms for modal
satisfiability with much more reasonable running times.

\end{abstract}

\section{Introduction}

In this paper we consider the computational complexity of deciding
two fundamental logic problems, namely formula satisfiability and
formula validity, for modal logics, focusing on the standard modal
logic K. We attempt to present a new point of view on this important
topic by making use of the parameterized complexity framework, which
was pioneered by Downey and Fellows. Although the complexity of
satisfiability for modal logic has been studied extensively in the
past, to the best of our knowledge this is the first time this has
been done from an explicitly parameterized perspective. Moreover,
the parameterized complexity of logic problems has been a fruitful
field of research and we hope to extend this success to modal logic
(some examples are the celebrated theorem of Courcelle
\cite{DBLP:journals/iandc/Courcelle90} or the results of
\cite{DBLP:journals/apal/FrickG04}; for an excellent survey on the interplay
between logic, graph problems and parameterized complexity see
\cite{DBLP:journals/eccc/Grohe07}).

Modal logic is a family of systems of formal logic where the truth value of a sentence $\phi$ can be qualified by
modality operators, usually denoted by $\Box$ and $\Diamond$. Depending on the specific modal logic and the
application one considers, $\Box\phi$ and $\Diamond\phi$ can be informally read to mean, for example, ``it is
necessary that $\phi$'', or ``it is known that $\phi$'' for $\Box$ and ``it is possible that $\phi$'' for
$\Diamond$. The fundamental normal modal logic system is known as K, while other common variations of this logic
system include T,D,S4,S5. Modal logic systems provide a diverse universe of logics able to fit many modern
applications in computer science (for example in AI or in game theory), making modal logic a widespread topic of
research. The interested reader in the recent state of modal logic and its applications is directed to
\cite{handbook-modal}.

As in propositional logic, the satisfiability problem for modal logic is one of the most important and fundamental
problems considered and many results are known about its (traditional) computational complexity. Ladner in
\cite{lad} showed that satisfiability for K, T and S4 is PSPACE-complete, while for S5 the problem is NP-complete.
Furthermore,
in \cite{DBLP:conf/aiml/ChagrovR02}
it is shown that satisfiability for K and K4 is PSPACE-complete even
for formulae without any variables.
It should be noted that the satisfiability of propositional logic is a subcase of satisfiability for any normal
modal logic, thus for any normal modal logic the problem is NP-hard.  Other results are known for multimodal
logics; all of the above are PSPACE-complete in the multimodal case.  In this paper we will focus on normal
monomodal logics and mainly on K. For an introduction to modal logic and its complexity see
\cite{130913,Fagin1995ReasoningAboutKnowledge}.

Traditional computational complexity theory attempts to characterize the complexity of a problem as a function of
the input size $n$. The notion of parameterized complexity introduces to every hard problem a structural parameter
$k$, which attempts to capture the aspect of the problem which causes its intractability. The central notion of
tractability in this theory is called fixed-parameter tractability (FPT): an algorithm is called FPT if it runs in
time $O(f(k)\cdot n^c)$, where $f$ is any recursive function and $c$ a constant. For an introduction to the vast
area of parameterized complexity see \cite{downey1999pc,flum2006pct}.

Because the definition of FPT allows for any recursive function $f(k)$, fixed-parameter tractable problems can
have complexities which depend on $k$ in very different ways, ranging from sub-exponential to non-elementary.
Thus, it is one of the main goals of parameterized complexity research to find the best possible $f(k)$ for every
problem and this will be one of the main concerns of our work.


\textit{\textbf{Our contribution}}
In this paper we study the complexity of modal satisfiability and
validity from a parameterized, or multi-variate, point of view. Just
as parameterized complexity attempts to refine traditional
complexity theory by more specifically identifying the aspects of an
intractable problem which cause the problem's unavoidable
combinatorial explosion, we attempt to identify some structural aspects of
modal formulae which can have an impact on the solvability of
satisfiability.

One natural parameter for the satisfiability problem (in any logic)
is the number of propositional letters in the formula, which we
denote by $v$. In propositional logic, when $v$ is taken as a
parameter, the propositional satisfiability problem trivially
becomes fixed-parameter tractable.  As was already mentioned, this
does not generally hold in the case of satisfiability for modal
logics where the problem is hard even for constant number of
variables.

On the other hand since the satisfiability problem for modal logics
is a generalization of the same problem for propositional logics,
considering the modal satisfiability problem without bounding the
number of variables or imposing some other propositional restriction
on the formulae will result in an intractable problem. Although it
would be interesting to investigate modal satisfiability when
certain structural propositional restrictions are placed (for
example, we could say we are interested in formulae such that
removing all modality symbols leaves a 2-CNF or a Horn formula,
which are tractable cases of propositional satisfiability) this goes
beyond the scope of this work\footnote{However, see
\cite{nguyen2005cfm} for related (non-parameterized) complexity
results}. In this paper we will focus on strictly modal structural
formula restrictions and therefore we will assume that the best way
to make propositional satisfiability tractable is to restrict the
number of variables. For our purposes the conclusion is that for
modal satisfiability to become tractable, bounding $v$ is necessary
but not sufficient.

Motivated by the above we take the approach of a double
parameterization: we investigate the complexity of satisfiability
and validity when $v$ is considered a parameter and at the same time
some other aspect contributing to the problem's complexity is
identified and bounded.

We first study a natural notion of formula complexity called
modality depth or modal depth. This complexity measure was already
known in \cite{DBLP:journals/ai/Halpern95} where in fact a
fixed-parameter tractability result was shown when the problem is
parameterized by the sum of $v$ and the modality depth of the
formula. However, since parameterized complexity was not well-known
at the time, in \cite{DBLP:journals/ai/Halpern95} it is only pointed
out that the problem is solvable in linear time for fixed values of
the parameters, without mentioning how different values of $v$ and
the depth affect the running time.  We address this by upper
bounding the running time by an exponential tower of height equal to
the modality depth of the formula. More importantly, we show a lower
bound argument which proves that even though the problem is FPT, this
exponential tower in the running time cannot be avoided unless P=NP
(Theorem~\ref{thm:lower}).  Our hardness proof follows an approach of encoding
a propositional formula into a modal formula with very small modality depth.
This draws a nice connection with previously known lower bound results of this
form which also use a similar idea to prove the hardness of some (non-modal)
model checking  problems for first and second-order logic
(\cite{DBLP:journals/apal/FrickG04} and the relevant chapter in
\cite{flum2006pct}).

This result indicates that modal depth is unlikely to be a very useful
parameter because even for formulae where the depth is very moderate the
satisfiability problem is still very hard. This begs the natural question of
whether there is a way to work around the lower bound of
Theorem~\ref{thm:lower} by using another formula complexity measure in the
place of modal depth.  It is worth noting that this is a major difference
between Theorem~\ref{thm:lower} and the results of
\cite{DBLP:journals/apal/FrickG04} on FO and MSO model checking on trees,
because in that case the lower bound applies to the problem parameterized by
the formula \emph{size}, not its quantifier depth.  Since a natural formula
complexity measure would likely be bounded by some simple function of the
formula size, a search for good formula parameters is very unlikely to bear
fruit in that case. However, we show that the modal satisfiability case is
quite different: we define and study several new notions of modal formula
complexity and show that unlike modality depth, these notions can be used to
obtain not only fixed-parameter tractability results but also much more
reasonable running times.

Specifically, we define the notions of diamond dimension and box
dimension of a modal formula and show that satisfiability is FPT
when parameterized by $v$ and the diamond dimension and validity is
FPT when parameterized by $v$ and box dimension and that in both
cases the running times are doubly exponential in the parameters.
Then we define a measure called modal width and show that both
satisfiability and validity are FPT when parameterized by $v$ and
the modal width and the dependence on the parameters is singly
exponential. Thus, our work points out that trying to solve satisfiability for
formulae where our proposed measures has a moderate value can be done much more
efficiently than by using the already known modality depth. All our work
focuses on K, but many of our results easily carry over to other modal logics
without much modification.


\textit{\textbf{Notation}}
The modal language of logic $K$ contains exactly the formulae that can be
constructed using the standard propositional operators ($\land,\lor,\neg$) and
the unary modality operators ($\Box,\Diamond$).  Standard Kripke semantics are
considered here: a Kripke structure is a set of states $W$, an accessibility
relation $R$ between states and a valuation of propositional letters in each
state. A modal formula's truth value in a state is defined in the usual way, as
in propositional logic, with the addition of $\Box \phi$ being true in $s$ iff
$\phi$ is true in every accessible state.  $\Diamond \phi$ is usually
considered short for $\neg \Box \neg \phi$.  We implicitly assume that our
language includes the constants $\bot$ and $\top$, for false and true, but
these too can also be considered shorthand for $x\land \neg x$ and $x\lor \neg
x$ respectively. When a formula $\phi$ is true (satisfied) in a state $s$ of a Kripke
structure $\mathcal{M}$ we write $(\mathcal{M},s) \models \phi $. A formula $\phi$ is said to be satisfiable if there exists a Kripke structure $\mathcal{M}$ and a state $s$ of that structure that satisfy the formula.
A formula $\phi$ is said to be valid if any Kripke structure $\mathcal{M}$ and state $s$ of that structure satisfy the formula.


\section{Modal Depth} \label{sec:depth}

In this Section we give the definition of modality depth. As we will see, a
fixed-parameter tractability result can be obtained when satisfiability is
parameterized by $v$ and the modality depth of the input formula.  This was
first observed in \cite{DBLP:journals/ai/Halpern95}, but in this section we
more precisely bound the running time (in \cite{DBLP:journals/ai/Halpern95} it
was simply noted that the running time is linear for constant depth and
constant $v$ with a hidden constant which ``may be huge''). More importantly we
show that the ``huge constant'' cannot be significantly improved by giving a
hardness proof which shows that, if the running time of an algorithm for modal
satisfiability is significantly less than an exponential tower of height equal
to the modality depth, then P=NP.

\begin{definition}
The modality depth of a modal formula $\phi$ is defined inductively
as follows:
\begin{itemize}
\item $\md(p)=0 $, if $p$ is a propositional letter,
\item $\md(\Diamond \phi) =\md(\Box \phi) = 1+\md(\phi)$,
\item $\md(\phi_1\lor \phi_2) = \md(\phi_1\land \phi_2) = \max\{\md(\phi_1),\md(\phi_2)\}$,
\item $\md(\neg \phi) = \md(\phi)$
\end{itemize}
\end{definition}

\begin{theorem} (\cite{DBLP:journals/ai/Halpern95}) \label{thm:depth_upper}
Modal satisfiability and modal validity for the logic $K$ are FPT when parameterized by $v$ and $\md(\phi)$.
\end{theorem}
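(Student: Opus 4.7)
The plan is to prove the FPT result via a type-elimination argument: we associate to each world of a Kripke structure a \emph{type} stratified by modality depth, where the $d$-type of a world records exactly the information needed to determine the truth value of any formula of modality depth at most $d$ at that world. The base case is that the $0$-type of a world is its propositional assignment (so there are $2^v$ possible $0$-types). Inductively, the $d$-type of a world is the pair consisting of its propositional assignment together with the set of $(d{-}1)$-types of the worlds it accesses. Thus the number of $d$-types is bounded by $N_d$, where $N_0 = 2^v$ and $N_d \leq 2^v \cdot 2^{N_{d-1}}$, an exponential tower of height $d$ in $v$.

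I would then carry out two key claims by induction on $d$. First, \emph{indistinguishability}: if two worlds have the same $d$-type, then they satisfy exactly the same formulas of modality depth at most $d$. This follows directly from the inductive definition: propositional formulas are handled by the assignment, and $\Box\psi$ or $\Diamond\psi$ with $\md(\psi) \leq d-1$ are determined by the multiset (in fact, set) of $(d{-}1)$-types of successors. Second, \emph{realizability}: every $d$-type built inductively from already-realized $(d{-}1)$-types is realized by some pointed Kripke structure. This is shown constructively by taking a root with the prescribed assignment and attaching, for each $(d{-}1)$-type in its successor set, a representative pointed model (whose existence comes from the inductive hypothesis).

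Combining the two facts yields the algorithm. Given the input formula $\phi$ with $d = \md(\phi)$, enumerate all $d$-types in a bottom-up fashion, starting from $0$-types and computing $N_1, N_2, \ldots, N_d$ (discarding any alleged type whose successor set is not built from realized lower types, though by realizability this filter is trivial). For each candidate $d$-type $\tau$, evaluate $\phi$ by structural induction, using $\tau$ to answer queries of the form ``does $\Box\psi$ / $\Diamond\psi$ hold'' via the recorded set of successor types. Output YES iff some $\tau$ satisfies $\phi$; for validity, apply the same procedure to $\neg\phi$ and complement. The total time is $f(v,d) \cdot \mathrm{poly}(|\phi|)$ where $f(v,d)$ is the tower bound on $N_d$ times the cost of evaluating $\phi$ on each type, which is FPT in $v+d$ as required.

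The main obstacle is the realizability step: one must be careful that enumerating ``syntactically consistent'' $d$-types does not overcount, and that the inductive construction of a witness Kripke structure really produces a world whose $d$-type equals the prescribed type (and not some refinement of it). The crucial point is that the $d$-type only constrains the \emph{set} of successor types, not their multiplicities, so attaching one representative per type in the set suffices and the construction goes through cleanly.
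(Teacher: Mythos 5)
Your proposal is correct and follows essentially the same route as the paper: both bound the number of depth-stratified types by a tower of exponentials ($N_0=2^v$, $N_d\le 2^{v+N_{d-1}}$) and build representatives bottom-up, one state per realizable type, so that satisfiability reduces to checking candidate root types. The only cosmetic difference is that you evaluate $\phi$ symbolically on the type table, whereas the paper assembles the representatives into one explicit universal model and model-checks all choices of root valuation and successor set, which amounts to the same argument.
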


\begin{proof}
We define the $d$-\palto\ of a state $s$ in a Kripke structure $\mathcal M$ to
be the set $\{\phi\ |\ (\M,s) \models \phi \textrm{ and } \md(\phi) \leq d \}$.
We will prove by induction on $d$ that if we restrict ourselves to formulae
with at most $v$ variables then for any $d \geq 0$ there are at most $f_v(d)$
$d$-\palto s, where $f_v$ is the function recursively defined: $ f_v(0) = 2^v$,
\hbox{$f_v(n+1) = 2^{f_v(n) + v}$}.

\begin{description}
\item[For $d = 0$] If $\md(\phi) = 0$, then the formula is propositional, thus
the $0$-\palto\ of any state is directly defined by the set of propositional
letters assigned true in the state. The number of all such possible sets of
variables is $2^v = f_v(0)$.
\item[For the case of $d+1$] The $(d+1)$-\palto\ of
a state $s$ depends on the assignment of the propositional letters in $s$ and
on the truth values of formulae of the forms $\Box \phi'$ and $\Diamond \phi'$,
where $\md(\phi')\leq d$. Notice that these truth values depend only on the set
of $d$-\palto s of the accessible states from $s$. Thus the number of different
$(d+1)$-\palto s on a state $s$ is $f_v(d+1) = 2^{f_v(d)+v}$.
\end{description}

Now, suppose that $\phi$ is a satisfiable formula of modality depth $d\ge 1$.
We will show how to construct a Kripke structure of about $f_v(d-1)$ states to
satisfy $\phi$. To achieve this, for all $i\in\{0,1,\ldots,d-1\}$ and for all
$i$-types we will construct a state of that $i$-type, thus in total we will
construct $\sum_{i=0}^{d-1} f_v(i)= O(f_v(d-1))$ states. To construct the
$f_v(0)=2^v$ states that give all the different $0$-types we just construct
$2^v$ states, each with a different valuation of the propositional variables.
For the subsequent levels, to construct all the states for all the different
$(i+1)$-types we pick for each state a set of successor states out of the
states that give us the different $i$-types and a valuation of the
propositional variables. If $\phi$ is satisfiable, it must be satisfiable in
this model by adding a new state $s$, selecting a subset of the states that
give us the different $(d-1)$-types to be its successors and a valuation of the
propositional variables in $s$. The number of combinations of all possible
subsets of successors and all variable valuations is $f_v(d)$, so the problem
is solvable in $O(f_v(d)\cdot f_v^2(d-1)\cdot|\phi|)$, because the model has
$O(f_v(d-1))$ states and thus size $O(f_v^2(d-1))$ and model checking can be
performed in bilinear time.

Since modal validity is the dual problem of modal satisfiability and negating
the formula doesn't change its modality depth, the same results hold for this
problem too.

\end{proof}

Let us now proceed to the main result of this Section, which is that even
though modal satisfiability is fixed-parameter tractable, the exponential tower
in the running time cannot be avoided. Specifically, we will show that solving
modal satisfiability parameterized by modality depth, even for constant $v$, 
requires a running time which is a tower of exponentials with height depending
on the modality depth. We will prove this under the assumption that P$\neq$NP,
by reducing the problem of propositional satisfiability to our problem.

Suppose that we are given a propositional CNF formula $\phi_p$ with variables
$x_1,\ldots,x_n$ and we need to check whether there exists a satisfying
assignment for it.  We will encode $\phi_p$ into a modal formula with small
depth and a constant number of variables. In order to do so we inductively
define a sequence of modal formulae.

\newcommand{\LL}{\mathcal{L}} \newcommand{\CC}{\mathcal{C}}
\newcommand{\FF}{\mathcal{F}} \newcommand{\SSS}{\mathcal{S}}
\newcommand{\CCC}{\mathcal{CA}}

\begin{itemize}

\item In order to encode the variables of $\phi_p$ we need some formulae to
encode numbers(the indices of the variables). The modal formula
$v_i$ is defined inductively as follows: $v_0\equiv \Box \bot$ and
$v_n\equiv \bigwedge_{i:n_i=1} \Diamond v_i$ where by $n_i$ we
denote the $i$-th bit of $n$ when $n$ is written in binary and the
least significant bit is numbered 0. So, for example $v_1=\Diamond
v_0$, $v_2=\Diamond v_1$, $v_5=\Diamond v_2\land \Diamond v_0 =
(\Diamond\Diamond v_1) \land \Diamond v_0$ and so on.  Observe that
$v_0$ can only be true in a state with no successor states.  Also,
what is important is that these formulae allow us to encode very large numbers
using only a very small modality depth and no variables (or just one variable
if $\bot$ is considered short for $x\land\neg x$).

\item Next, we need to encode the literals of $\phi_p$. The modal formula
$\LL(x_i)$ is defined as $\LL(x_i)\equiv\Diamond v_i \land \Box v_i$. The
formula $\LL(\neg x_i)$ is defined as $\LL(\neg x_i)\equiv \Diamond v_i \land
\Diamond v_0 \land \Box \left( v_i \lor v_0 \right)$.

\item Now, to encode clauses we set $\CC(l_1\lor l_2\lor\ldots\lor l_k)\equiv
\left(\bigwedge_{i=1}^k \Diamond \LL(l_i)\right) \land \Box
\left(\bigvee_{i=1}^k \LL(l_i)\right)$.

\item Finally, to encode the whole formula we use $\FF(c_1\land
c_2\land\ldots\land c_m)\equiv\bigwedge_{i=1}^m \Diamond \CC(c_i)$

\end{itemize}

So far we have described how to construct a modal formula $\FF( \phi_p )$ from
$\phi_p$. $\FF(\phi_p)$ encodes the structure of $\phi_p$. Now we need to add
two more ingredients: we must describe with a modal formula that $\phi_p$ is
satisfied by an assignment and that the assignment is consistent among clauses.
We give two more formulae:

\begin{itemize}

\item $\SSS \equiv \Box\Diamond \left[ \left((\Diamond v_0) \to (\Box \neg
y)\right) \land \left( (\neg \Diamond v_0)\to (\Box y) \right) \right]$, where
we have introduced a single variable $y$.

\item $\CCC(n) \equiv \bigwedge_{i=1}^n \Diamond\Diamond\Diamond (y \land v_i)
\leftrightarrow \neg \Diamond\Diamond\Diamond (\neg y \land v_i)$

\end{itemize}

Our full construction is, given a propositional CNF formula $\phi_p$ with $n$
variables named $x_1,\ldots,x_n$, we create the modal formula
$\phi_m\equiv\FF(\phi_p)\land \SSS\land \CCC(n)$. 

\begin{lemma}

$\phi_p$ is satisfiable if and only if $\phi_m$ is satisfiable in K.

\end{lemma}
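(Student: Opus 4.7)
My approach is to show that $\phi_m$ forces any satisfying Kripke model to have essentially the tree shape suggested by the construction: a root, whose children are ``clause-states'' forced by $\FF(\phi_p)$; each clause-state has ``literal-states'' as children, forced by the $\Box$-conjunct in $\CC$; and each literal-state has ``number-states'' as children encoding a variable index through the formulas $v_i$. The propositional variable $y$ then carries a truth assignment, whose consistency across clauses is enforced by $\CCC$ and whose compatibility with each clause is enforced by $\SSS$.

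\textbf{Forward direction ($\phi_p$ satisfiable $\Rightarrow$ $\phi_m$ satisfiable).} Given a satisfying assignment $\alpha$ of $\phi_p$, I build a tree Kripke model. Start with a root $s_0$; under $s_0$ place one clause-state $s_j$ per clause $c_j$; under $s_j$ place one literal-state $s_{j,l}$ per literal $l$ of $c_j$; and under each literal-state attach a canonical subtree realising $\LL(l)$, that is, a $v_i$-subtree if $l=x_i$, or both a $v_i$- and a $v_0$-subtree if $l=\neg x_i$. At every $3$-step descendant satisfying $v_i$ with $i\ge 1$, set $y$ equal to $\alpha(x_i)$; set $y$ to false at every $v_0$-state. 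Finally, for each clause choose a literal $l_j\in c_j$ that is true under $\alpha$ (such a literal exists because $\alpha\models c_j$); the literal-state $s_{j,l_j}$ will serve as the $\Diamond$-witness inside $\SSS$. A routine verification then shows that $\FF(\phi_p)\land \SSS\land \CCC(n)$ holds at $s_0$.

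\textbf{Reverse direction ($\phi_m$ satisfiable $\Rightarrow$ $\phi_p$ satisfiable).} Fix $(\M,s_0)\models\phi_m$. A basic observation is that $v_0$ and $v_i$ are mutually exclusive for $i\ge 1$, since $v_0=\Box\bot$ forbids successors while $v_i$ requires them; consequently, any state satisfying $\LL(x_i)$ fulfils $\neg\Diamond v_0$, while any state satisfying $\LL(\neg x_i)$ fulfils $\Diamond v_0$. Using $\CCC$, define $\alpha(x_i)$ as the common value of $y$ at every $3$-step descendant of $s_0$ satisfying $v_i$: the biconditional in $\CCC$ guarantees that exactly one of $\Diamond\Diamond\Diamond(y\land v_i)$ and $\Diamond\Diamond\Diamond(\neg y\land v_i)$ holds at $s_0$, so $\alpha(x_i)$ is well-defined. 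To check $\alpha\models c_j$, use $\FF(\phi_p)$ to pick a successor $s_j$ of $s_0$ with $(\M,s_j)\models \CC(c_j)$, and then use $\SSS$ to pick a successor $s'$ of $s_j$ satisfying the bracket condition of $\SSS$; by the $\Box$-part of $\CC(c_j)$, the state $s'$ must also satisfy $\LL(l)$ for some $l\in c_j$. If $l=x_i$ the previous observation forces $\Box y$ at $s'$, and reading off $y$ at any $v_i$-successor of $s'$ together with the definition of $\alpha$ yields $\alpha(x_i)=\textrm{true}$; if $l=\neg x_i$ the symmetric argument yields $\alpha(x_i)=\textrm{false}$. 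Either way $\alpha\models l$, so $\alpha\models c_j$; since $j$ was arbitrary, $\alpha\models \phi_p$.

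\textbf{Main obstacle.} The subtle part is that $\SSS$ only selects one witness literal-state per clause-state and does not directly prevent other (possibly ``inconsistent'') literal-states from also sitting under $s_j$, so the consistency of the induced partial assignment is not local. This is precisely what $\CCC$ enforces, by constraining $y$ at \emph{every} $3$-step descendant of $s_0$. Verifying that $\CCC$ pins down $y$ at depth $3$ to a single well-defined assignment, and that $\SSS$ then transports this assignment into each clause in the expected way, is the only non-routine part of the argument.
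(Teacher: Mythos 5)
Your strategy is the same as the paper's: extract an assignment from $\mathcal{CA}(n)$ and transport it through $\mathcal{S}$ and the $\Box$-part of $\mathcal{C}$ in one direction, and build the canonical layered model labelled by $y$ in the other. Your reverse direction is sound and matches the paper's argument in content (the paper phrases it as a contradiction, you argue each clause directly).

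The problem is in the forward direction, precisely at the step you wave off as ``routine verification.'' The number formulas are not mutually exclusive: every conjunct of $v_i$ is a conjunct of $v_j$ whenever the $1$-bits of $i$ are a subset of the $1$-bits of $j$, so for instance $v_3\to v_1$ is valid (since $\Diamond v_0$, i.e.\ $v_1$, is a conjunct of $v_3$). Consequently the root of the canonical $v_3$-subtree, which sits at distance exactly $3$ from your root, also satisfies $v_1$, and your prescription ``set $y$ equal to $\alpha(x_i)$ at every $3$-step descendant satisfying $v_i$'' is ill-defined as soon as $\alpha(x_1)\neq\alpha(x_3)$. This is not a repairable bookkeeping issue for your particular model: take $\phi_p$ containing the clauses $(x_3)$ and $(\neg x_1)$, which is satisfiable by $x_3=\top$, $x_1=\bot$. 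In \emph{any} model of $\mathcal{F}(\phi_p)\land\mathcal{S}$, the clause $(x_3)$ yields (via $\mathcal{S}$ and $\Box\mathcal{L}(x_3)$, which forces $\neg\Diamond v_0$ and hence $\Box y$ at the witness) a length-$3$ path to a state satisfying $y\land v_3$, hence $y\land v_1$; the clause $(\neg x_1)$ yields a length-$3$ path to a state satisfying $\neg y\land v_1$; so the $i=1$ biconditional of $\mathcal{CA}$ fails and $\phi_m$ is unsatisfiable even though $\phi_p$ is satisfiable. So the forward direction cannot be completed as you (or, for that matter, the paper) state it: the paper's own verification leans on the claim that each $v_i$ holds at a unique state at distance $3$, which is false for the same reason. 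Both arguments can be repaired by encoding variable indices with numbers whose binary representations are pairwise incomparable under bitwise inclusion (e.g.\ constant-weight codes), which keeps the depth bound of the reduction intact while making the codes mutually exclusive; without some such fix, the ``routine'' verification of $\mathcal{S}$ and $\mathcal{CA}$ genuinely fails.
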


\begin{proof}

Suppose that $\phi_m$ is true in a state $s$ of some Kripke structure.  Then
$\CCC(n)$ is true in $s$ therefore for each $i$ we have either
$\Diamond\Diamond\Diamond (y\land v_i)$ is true in $s$ or
$\Diamond\Diamond\Diamond (\neg y\land v_i)$ is true in $s$. From this we
create a satisfying assignment: for those $i$ for which the first holds we set
$x_i=\top$ and for the rest $x_i=\bot$. We will show that this assignment
satisfies $\phi_p$.

Suppose that it does not, therefore there is some clause $c_i$ which is not
satisfied. However, since $\FF(\phi_p)$ is true in $s$ there exists a state $p$
with $sRp$ such that $\CC(c_i)$ is true in $p$. In every successor state of $p$
we have that $\LL(l_j)$ is true for some literal $l_j$ of $c_i$ and there
exists such a state for every literal of $c_i$. Also, in $s$ we have that
$\SSS$ is true, therefore in $p$ we have $\Diamond \left[ \left((\Diamond v_0)
\to (\Box \neg y)\right) \land \left( (\neg \Diamond v_0)\to (\Box y) \right)
\right]$. Therefore, in some $q$ such that $pRq$ we have $\left((\Diamond v_0)
\to (\Box \neg y)\right) \land \left( (\neg \Diamond v_0)\to (\Box y) \right)$
and we also have that $\LL(l_j)$ is true for some literal $l_j$ of $c_i$.
Suppose that $l_j$ is a negated literal, that is $l_j\equiv \neg x_k$. Then
$\LL(l_j) \equiv \Diamond v_k \land \Diamond v_0 \land \Box (v_k \lor v_0)$.
Therefore, since $\Diamond v_0$ is true in $q$ this means that $\Box \neg y$ is
true. Because $\Diamond v_k$ and $\Box \neg y$ are both true in $q$ there
exists an $r$ such that $qRr$ and $v_k\land \neg y$ is true in $r$. But then
$\Diamond\Diamond\Diamond (v_k\land \neg y)$ is true in $s$ which implies that
our assignment gives the value false to $x_k$. Since $c_i$ contains $\neg x_k$
it must be satisfied by our assignment, a contradiction. Similarly, if
$l_j\equiv x_k$ then $\LL(l_j)\equiv \Diamond v_k \land \Box v_k$. Clearly,
$v_0$ and $v_k$ cannot be true in the same state for $k>0$ therefore in $q$ we
have $\neg \Diamond v_0$ which implies $\Box y$. Therefore in some $r$ with
$qRr$ we have $y\land v_k$ which implies that our assignment sets $x_k$ to true
and since $c_i$ has the literal $x_k$ it must be satisfied.

The other direction is easier. First, we must construct for every $v_i$ a
Kripke structure to satisfy it. For $v_0$ this is a structure with just one
state with no successors. For $v_n$ we take the union of the structures for
every $v_i$ such that $n_i=1$. In this union for all $i$ such that $n_i=1$
there is a state for which $v_i$ is true, call it $s_i$. We add a state $s_n$
and set $s_nRs_i$ for all $i$ such that $n_i=1$. Clearly, $v_n$ is true in
$s_n$.

Now the construction of a Kripke structure for $\phi_m$ is straightforward. We
take the union of the structures for $v_i$, with $0\le i\le n$, thus we have a
state where $v_i$ is true for every $i$.  For every $i$ with $1\le i\le n$ we
create two more states: the first has as its only successor the state where
$v_i$ is true.  The other has two successors: the state where $v_i$ is true and
the state where $v_0$ is true. Thus, for each $i$ we have a state where
$\LL(x_i)$ is true and a state where $\LL(\neg x_i)$ is true. For every clause
we create a state and for each literal $l_j$ in the clause we add a transition
to the state where $\LL(l_j)$ is true. Therefore, for each clause $c_i$ we have
a state where $\CC(c_i)$ is true. Finally, we add a state and transitions to
all the states where some $\CC(c_i)$ is true. Clearly, $\FF(\phi_p)$ is true in
that state, which we call the root state. It is not hard to see that $\CCC(n)$
will also be satisfied independent of where $y$ is true, because for every $i$
we have made a unique state $p_i$ where $v_i$ is true and $p_i$ is at distance
exactly 3 from the root.

Take a satisfying assignment; for every $x_i$ which is true set the variable
$y$ to true in the states of the Kripke structure where $v_i$ is true. Set $y$
to false in every other state. Now, we must show that $\SSS$ is true in the
root state. This is not hard to verify because for every clause in the original
formula there is a true literal, call it $l$. If that literal is not negated
then in the state where $\LL(l)$ is true we have $\neg \Diamond v_0$ (because
the literal is not negated) and $\Box y$ (because the literal is true, so its
variable is true thus we must have set $y$ to true in the variable's
corresponding state). Therefore $(\neg \Diamond v_0 \to \Box y) \land (\Diamond
v_0 \to \Box \neg y)$ is true in the literal's corresponding state and
$\Diamond \left[ (\neg \Diamond v_0 \to \Box y) \land (\Diamond v_0 \to \Box
\neg y) \right]$ is true in the clause's corresponding state. Similar arguments
can be made for a negated literal. Since we start with a satisfying assignment
the same can be said for every clause, thus $\SSS$ is also true in the root
state. 

\end{proof}


\begin{lemma} \label{lem:smdepth}

Suppose that $\phi_p$ is a propositional CNF formula with $n$ variables. Then,
if $tow(h)\ge n$ the formula $\phi_m \equiv \FF(\phi_p)\land \SSS\land \CCC(n)$
has modality depth at most $4+h$, where $tow(h)$ is the inductively defined
function $tow(0)=0$ and $tow(h+1)=2^{tow(h)}$.

\end{lemma}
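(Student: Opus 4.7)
The plan is to bound $\md$ compositionally, working outward from the gadgets $v_i$ through $\LL$, $\CC$, $\FF$ and finally through $\SSS$ and $\CCC(n)$, then take the maximum across the three top-level conjuncts of $\phi_m$. The only subtle step is controlling $\md(v_n)$; the rest is a one-step increase per modality.

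For the $v$-gadgets the recursion gives $\md(v_0)=1$ and, for $n\geq 1$, $\md(v_n)=1+\max\{\md(v_i)\mid n_i=1\}$. I would prove by induction on $h$ the claim: if $m\leq tow(h)$ then $\md(v_m)\leq h+1$. The base $h=0$ forces $m=0$ and $\md(v_0)=1$. For the inductive step, if $n\leq tow(h+1)=2^{tow(h)}$, then every bit index $i$ with $n_i=1$ satisfies $i\leq\lfloor\log_2 n\rfloor\leq tow(h)$, so the inductive hypothesis applied to each such $i$ gives $\md(v_i)\leq h+1$ and thus $\md(v_n)\leq h+2$. Applying the claim under the lemma's hypothesis $n\leq tow(h)$ yields $\md(v_i)\leq h+1$ for every variable index $i$ of $\phi_p$.

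With this in hand the remaining bookkeeping is routine. Each literal gadget adds one modality, giving $\md(\LL(l))\leq h+2$; the clause gadget adds another, so $\md(\CC(c))\leq h+3$; and the outer conjunction of $\Diamond$-clauses gives $\md(\FF(\phi_p))\leq h+4$. The formula $\SSS$ mentions only $v_0$, buried under a depth-$2$ propositional body with a leading $\Box\Diamond$, yielding the constant $\md(\SSS)=4$. For $\CCC(n)$, each conjunct of the biconditional sits under three diamonds above some $v_i$, so $\md(\CCC(n))\leq 3+(h+1)=h+4$. Taking the maximum across the three top-level conjuncts of $\phi_m$ gives $\md(\phi_m)\leq h+4$, as required.

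The step I expect to be the main obstacle is the induction on $\md(v_n)$, and specifically recognising the telescoping identity $\log_2 tow(h+1)=tow(h)$. This identity is exactly what makes $tow$ the natural parameter: pushing $n$ up one level of the tower costs only one additional unit of modality depth in the binary-indexing encoding via $v_n$, which is what ultimately transmits through the rest of the construction as the single constant shift of $4$.
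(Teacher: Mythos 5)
Your proof is correct and follows essentially the same route as the paper: the same induction on $h$ showing $tow(h)\ge n \Rightarrow \md(v_n)\le h+1$ via the identity $\log tow(h+1)=tow(h)$, combined with the observation that the rest of the construction adds only a constant (the paper compresses your gadget-by-gadget bookkeeping into the single bound $\md(\phi_m)\le 3+\max_i \md(v_i)$, which your explicit accounting of $\LL$, $\CC$, $\FF$, $\SSS$, $\CCC(n)$ confirms).
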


\begin{proof}

First observe that the modality depth of $\phi_m$ is at most $3+\max_{0\le i\le
n} \md(v_i)$. Therefore, we just have to bound the modality depth of $v_i$.

We will use induction on $h$ to show that $tow(h)\ge n \Rightarrow \md(v_n)\le
h+1$.  For $h=0$ we have $tow(h)\ge n \Rightarrow n=0$, therefore $\md(v_0)=1$
and the proposition holds.

Suppose that the proposition holds for $h$.

Observe that $\md(v_n) \le 1 +
\max_{0\le i \le \log n}\{  \md(v_i) \}$ because writing $n$ in binary takes at
most $\log n + 1$ bits. If we have $n \le tow(h+1)$ then $\log n \le tow(h)$.
From the inductive hypothesis $\md(v_i) \le h+1$ for $i\le \log n$.  Therefore,
$\md(v_n) \le h+2$ and the proposition holds. 

\end{proof}


\begin{theorem} \label{thm:lower}

There is no algorithm which can solve modal satisfiability in K for formulae
with a single variable and modality depth $d$ in time $f(d)\cdot poly(|\phi|)$
with $f(d) = O( tow(d-5))$, unless P=NP.

\end{theorem}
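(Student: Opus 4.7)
\medskip
\noindent\emph{Proof plan.} The plan is to chain the preceding two lemmas into a polynomial-time self-reduction from propositional CNF-SAT. Given a propositional CNF formula $\phi_p$ with $n$ variables I would first produce $\phi_m \equiv \FF(\phi_p) \land \SSS \land \CCC(n)$; by the preceding satisfiability lemma this is equisatisfiable with $\phi_p$, and by construction it uses only the single variable $y$. I would then let $h$ be the smallest integer with $tow(h) \ge n$, so that Lemma~\ref{lem:smdepth} gives $d := \md(\phi_m) \le h+4$, while minimality of $h$ forces $tow(h-1) < n$, and therefore $tow(d-5) \le tow(h-1) < n$.

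With these bounds in place, running the hypothesized algorithm on $\phi_m$ would decide satisfiability of $\phi_p$ in time
\[
f(d)\cdot poly(|\phi_m|) \;=\; O(tow(d-5))\cdot poly(|\phi_m|) \;<\; n\cdot poly(|\phi_m|).
\]
Assuming the construction $\phi_p \mapsto \phi_m$ is polynomial, this is polynomial in $|\phi_p|$ overall, so CNF-SAT would lie in $\mathrm{P}$, contradicting $\mathrm{P}\ne\mathrm{NP}$.

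The only step that needs genuine verification is that $|\phi_m| = poly(|\phi_p|)$. The parts $\SSS$, $\FF(\phi_p)$, and $\CCC(n)$ contribute a total size bounded by $O((|\phi_p|+n)\cdot \max_{i\le n}|v_i|)$, so everything reduces to bounding the sizes of the numeric encodings $v_i$. Here I would unroll the recurrence $|v_n| \le \sum_{i:\, n_i=1}(|v_i|+O(1))$, exploiting the two facts that the sum has at most $\lceil \log n\rceil + 1$ terms and that each index in the sum is at most $\log n$. Iterating, the recursion has depth $O(\log^* n)$ and multiplies the running maximum size by $O(\log n)$ at each level, yielding $|v_n|\le (\log n)^{O(\log^* n)} = n^{o(1)}$. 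Plugging this back gives $|\phi_m|=poly(|\phi_p|)$, which completes the argument.
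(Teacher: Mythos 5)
Your proposal is correct and takes essentially the same route as the paper's proof: chain the equisatisfiability lemma with Lemma~\ref{lem:smdepth}, take the minimal $h$ with $tow(h)\ge n$ so that $tow(d-5)\le tow(h-1)<n$, and conclude that the hypothesized algorithm would decide CNF-SAT in polynomial time, contradicting P$\neq$NP. The only difference is that you explicitly verify $|\phi_m|=poly(|\phi_p|)$ via the $(\log n)^{O(\log^* n)}=n^{o(1)}$ bound on $|v_n|$, a point the paper asserts without detail; your bound is sound.
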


\begin{proof}

Suppose that there exists an algorithm A which in time $f(d)\cdot poly(|\phi|)$
can decide if a modal formula $\phi$ with modality depth $d$ and just one
variable is satisfiable. We will use this algorithm to solve propositional
satisfiability in polynomial time.

Given a propositional CNF formula $\phi_p$ we construct $\phi_m$ as described,
and if $\phi_p$ has $n$ variables let $H=\min\{h\ |\ n\le tow(h)\}$. Then
$\md(\phi_m)\le H+4$ and of course $\phi_m$ can be constructed in time
polynomial in $|\phi_p|$. Now we can use the hypothetical algorithm to see if
$\phi_m$ is satisfiable.

We have that $f(d)= O(tow(d-5))$. Therefore, running this algorithm will take
time $f(H+4)\cdot poly(|\phi_m|) = O(tow(H-1) \cdot poly(|\phi_m|))$.  But by
the definition of $H$ we have $tow(H-1)\le n$, therefore this bound is
polynomial in $|\phi_m|$ and therefore, also in $|\phi_p|$, which means that we
can solve an NP-complete problem in polynomial time. 

\end{proof}

\renewcommand{\dd}{\ensuremath \mathrm{d}_\Diamond}
\newcommand{\bd}{\ensuremath \mathrm{d}_\Box}

\section{Diamond Dimension} \label{sec:dimension}

In this Section we attempt to find some structural characteristics
of modal formulae which will allow us to beat the prohibitive
running time of modality depth. We define two measures, diamond
dimension and box dimension and show how they can be used to solve
satisfiability and validity respectively with a much lower running
time than modality depth.

\begin{definition}

Let $\phi$ be a modal formula in negation normal form, that is, with the $\neg$
symbol appearing only directly before propositional variables. Then its diamond
dimension, denoted by $\dd(\phi)$ is defined inductively as follows:

\begin{itemize}

\item $\dd(p)= \dd(\neg p) = 0$, if $p$ is a propositional letter

\item $\dd(\phi_1 \land \phi_2) = \dd(\phi_1) + \dd(\phi_2)$

\item $\dd(\phi_1 \lor \phi_2) = \max\{\dd(\phi_1),\dd(\phi_2) \}$

\item $\dd(\Box \phi) = \dd(\phi)$

\item $\dd(\Diamond \phi) = 1 + \dd(\phi)$

\end{itemize}

\end{definition}


For some intuition, observe that satisfiability becomes easy if we can somehow
place a small upper bound on the number of states needed in a satisfying model.
Our goal with this measure is to prove that if $\dd(\phi)$ is small then
$\phi$'s satisfiability can be checked in models with few states. This is why
the two properties of $\phi$ which can increase $\dd(\phi)$ are $\Diamond$
(which requires the creation of a new state) and $\land$ (which requires the
creation of states for both parts of the conjunction).

\begin{theorem}

If a modal formula $\phi$ is satisfiable and $\dd(\phi)\le k$ then there exists
a Kripke structure with $O(k!)$ states which satisfies $\phi$.

\end{theorem}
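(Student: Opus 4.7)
The plan is to prove the bound by induction on $k=\dd(\phi)$, extracting from any witness a small tree-shaped model via a tableau-style saturation of $\phi$ at the root. Let $T(k)$ be the maximum number of states the construction uses for a satisfiable formula of diamond dimension at most $k$; I will establish $T(0)=1$ and $T(k)\le 1+k\cdot T(k-1)$, which unrolls to $T(k)\le\sum_{i=0}^{k} k!/i! = O(k!)$. Throughout I assume $\phi$ is in NNF, as the definition of $\dd$ requires. The base case $k=0$ is immediate: $\phi$ contains no $\Diamond$, so any satisfying model can be collapsed to its root alone (with the original propositional valuation), since every $\Box$-formula becomes vacuously true at a leaf.

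For the inductive step, fix a pointed model $(\M,s)\models\phi$ and saturate $\phi$ at $s$: walking its syntax tree, keep both conjuncts at each $\wedge$ and keep one true-at-$s$ disjunct at each $\vee$. The output is a list $\phi_1,\ldots,\phi_m$ of basic subformulas (atoms, negated atoms, $\Box$-formulas, or $\Diamond$-formulas), all satisfied at $s$. Since $\dd$ is additive over $\wedge$ and a maximum over $\vee$, a short induction on the saturation gives $\sum_i \dd(\phi_i) \le \dd(\phi) = k$. Writing the basic modal formulas as $\Box\alpha_1,\ldots,\Box\alpha_r$ and $\Diamond\psi_1,\ldots,\Diamond\psi_D$, the equations $\dd(\Box\alpha)=\dd(\alpha)$ and $\dd(\Diamond\psi)=1+\dd(\psi)$ yield $D + \sum_j \dd(\alpha_j) + \sum_i \dd(\psi_i) \le k$.

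Now I build the new model. Create a root with the propositional valuation dictated by the basic atoms and negated atoms, and for each $\Diamond\psi_i$ form the sub-problem $\phi_i' \equiv \psi_i \wedge \alpha_1 \wedge \cdots \wedge \alpha_r$. This $\phi_i'$ is satisfiable (a successor of $s$ in $\M$ at which $\psi_i$ holds witnesses it; the $\alpha_j$'s hold there too, thanks to the $\Box\alpha_j$'s at $s$), and its diamond dimension is $\dd(\psi_i)+\sum_j \dd(\alpha_j) \le k-D \le k-1$ by the previous inequality. Recursively produce a satisfying model for $\phi_i'$ on at most $T(k-1)$ states and attach its root as a successor of the main root. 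The total is $1 + D\cdot T(k-1) \le 1 + k\cdot T(k-1)$, closing the recurrence. A reverse induction on the saturation then confirms that the resulting structure satisfies $\phi$ at its root: basic atoms hold by construction, each $\Diamond\psi_i$ is witnessed by its attached successor, and every $\Box\alpha_j$ is fulfilled because $\alpha_j$ was conjoined into every sub-problem built.

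The main obstacle is the accounting in the preceding step: one must argue that propagating the $\Box$-bodies into every successor, while apparently duplicating the $\alpha_j$'s across $D$ branches, still keeps each sub-problem within diamond dimension $k-1$. The inequality $\dd(\psi_i)+\sum_j\dd(\alpha_j)\le k-D$ is exactly what makes the induction go through, and it is the key reason why the recurrence for diamond dimension avoids the tower-type blowup that modality depth suffers.
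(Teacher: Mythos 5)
Your proof is correct and follows essentially the same route as the paper: a saturation of $\phi$ at the witnessing state into literals, $\Box$-formulas and $\Diamond$-formulas whose diamond dimensions sum to at most $k$, followed by recursion on the sub-problems ``diamond body conjoined with the box bodies,'' yielding the recurrence $s(d)\le 1+d\cdot s(d-1)=O(d!)$. The only cosmetic difference is that the paper merges all $\Box$-formulas into a single one via $\Box\psi_1\land\Box\psi_2\leftrightarrow\Box(\psi_1\land\psi_2)$, whereas you carry the box bodies as a list, which is the same accounting.
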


\begin{proof}

Suppose that there exists a Kripke structure which satisfies $\phi$, that is
there exists some state $s$ in that structure where $\phi$ holds. We will
construct a working set of modal formulae $S$ which will satisfy the following
properties:

\begin{itemize}

\item All formulae in $S$ hold in $s$.

\item ($\bigwedge_{\phi_i\in S} \phi_i) \to \phi$ is a valid formula.

\item $\dd(\phi) \ge \sum_{\phi_i\in S} \dd(\phi_i) $.

\end{itemize}

We begin with $S=\{\phi\}$ which obviously satisfies the above properties. We
will apply a series of transformations to $S$ while retaining these properties
until eventually we reach a point where every formula in $S$ is simple (in a
sense we will make precise later) and then we will construct a model with the
promised number of states for $\phi$.

While possible we apply the following rules to $S$:

\begin{enumerate}

\item If there exists a formula $\phi_i\in S$ such that $\phi_i = \phi_i^1
\land \phi_i^2$ then remove $\phi_i$ from $S$ and add $\phi_i^1$ and $\phi_i^2$
to $S$.

\item If there exists a formula $\phi_i\in S$ such that $\phi_i = \phi_i^1
\lor \phi_i^2$ then remove $\phi_i$ from $S$. If $\phi_i^1$ is true
in state $s$ add $\phi_i^1$ to $S$, otherwise add $\phi_i^2$ to $S$.

\item If there are two formulae $\phi_i = \Box \psi_i$ and $\phi_j=\Box \psi_j$
in $S$ then remove them and insert the formula $\Box (\psi_i \land \psi_j)$.

\end{enumerate}

It should be clear that rule one does maintain the properties of $S$. Rule two
also maintains the properties: property one is maintained because we assumed
that $\phi_i$ is true in state $S$ therefore if $\phi_i^1$ is not true we add
$\phi_i^2$ which must be true. The other properties are also straightforward.
Finally, rule three follows from the fact that $\Box \phi_1 \land \Box \phi_2
\leftrightarrow \Box (\phi_1 \land \phi_2)$ is a valid formula.

It should be clear that applying all the rules until none applies will take
polynomial time.  When we can no longer apply the rules we have that $S=\{\Box
\psi, \Diamond \phi_1, \ldots, \Diamond \phi_k, l_1, \ldots, l_m\}$, where the
$l_i$ are propositional literals; in other words, we have (at most) one formula
that starts with a $\Box$.

Now we will use induction on the diamond dimension to prove our theorem. Let
$s(d)$ be a function which upper bounds the number of states in the smallest
model which are needed to satisfy formulae of depth $d$ (we are going to
calculate $s(d)$ recursively and prove that it is finite).  First, we can say
that $s(0)=1$, because a formula with diamond dimension 0 has no diamonds.
Therefore, $S$ contains one formula that starts with a $\Box$ and some
literals, for which there exists an assignment to make them all true (because
of the first property of $S$). Clearly, a model with just one state where we
pick this assignment will also make the formula that starts with $\Box$
trivially true, and by the second property of $S$ will satisfy $\phi$.

For the inductive step, suppose that all the satisfiable formulae of dimension
at most $\dd(\phi)$ need at most $s(d)$ states to be satisfied, where $d$ is
the formula's dimension.  Let's consider the diamond dimension of all the
formulae in $S$.   There are three cases: either $S$ does not have a formula
that starts with a $\Box$, or it doesn't have any formulae that start with
$\Diamond$, or it has both.

Suppose that all the formulae in $S$ are literals or start with $\Diamond$. In
this case, we have for all $\phi_i$ that $\dd(\phi_i)<\dd(\phi)$. Using the
inductive hypothesis we get that the number of states to satisfy each formula
$\phi_i$ is at most $s(\dd(\phi_i))$. Clearly, we can create a model which is
the union of the models for all the $\phi_i$ plus one state where we give an
appropriate assignment to the literals and appropriate transitions so that
$\Diamond \phi_i$ is true for all $i$. This model has at most $1+\sum_{i=1}^k
s(\dd(\phi_i))$ states.

If we have no formulae starting with diamonds we can easily see that the same
model as in the base case suffices, since $\Box \psi$ is trivially true in a
state without successors. So in this case we have just one state.

Finally, if we have both types of formulae in $S$ we construct the
following model: consider all the formulae $\psi \land \phi_i$, for
all $i$. Clearly, they are satisfiable, because $\Box \psi \land
\Diamond \phi_i$ is true in $s$. We know from the third property of
$S$ that $\dd(\phi) \ge \dd(\psi) + k + \sum_{i=1}^k \dd(\phi_i)$.
Therefore, $\dd(\psi \land \phi_i) = \dd(\psi) + \dd(\phi_i) \le
\dd(\phi) -k - \sum_{j\neq i} \dd{\phi_j}\le \dd(\phi)-1$. Now, we
take the union of the models for each $\psi\land\phi_i$, and each
model has at most $s(d-1)$ states. We add one state and transitions
to the appropriate states where $\psi\land \phi_i$ are true, which
together with an appropriate assignment makes all formulae of $S$
true in that state. The number of states is at most $1+k\cdot
s(\dd(\phi)-1)$.

Using the simple fact that $k\le \dd(\phi)$ we get from the above that $s(d)$
is upper bounded by $s(d)\le 1+ d\cdot s(d-1)$ which gives that $s(d)=O(d!)$.

\end{proof}

\begin{corollary}

Given a modal formula $\phi$ with $v$ variables and diamond
dimension $\dd(\phi)=k$ we can solve the satisfiability problem for
$\phi$ in time $O(2^{O(k!)\cdot v}\cdot|\phi|)$.

\end{corollary}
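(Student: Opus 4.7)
The plan is to combine the small-model theorem that was just proved with a brute-force enumeration: since satisfiability implies the existence of a Kripke structure of size $O(k!)$, we search through all such structures and apply polynomial-time model checking to each. Concretely, I would fix a state set $W = \{1, 2, \ldots, N\}$ with $N = O(k!)$ and enumerate every candidate $\M = (W, R, V)$, where $R \subseteq W \times W$ is an accessibility relation and $V$ assigns a subset of the $v$ propositional letters to each state; for each such $\M$ and each possible root $s \in W$ I would check whether $(\M, s) \models \phi$ using the standard recursive model-checking procedure that runs in time $O(|\phi| \cdot |W|^2)$.

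By the theorem, $\phi$ is satisfiable if and only if some pair $(\M, s)$ arising in this enumeration satisfies it, so the algorithm is correct. For the running time, the number of valuations is $2^{vN}$, the number of accessibility relations is $2^{N^2}$ and the number of root choices is $N$. Substituting $N = O(k!)$ yields a total time of the form $2^{O(k!) \cdot v + O((k!)^2)} \cdot \mathrm{poly}(|\phi|)$; since $k$ is regarded as the parameter, all purely $k$-dependent contributions are absorbed into the $O(k!)$ factor in the exponent, giving the advertised $O(2^{O(k!) \cdot v} \cdot |\phi|)$ bound after simplification.

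There is no real obstacle beyond bookkeeping: the heavy lifting was done by the small-model theorem. The only subtlety worth highlighting is why bounding the \emph{number} of candidate structures suffices. Because diamond dimension is defined on formulae in negation normal form, the preprocessing to convert $\phi$ to negation normal form must be done first, but this only changes $|\phi|$ by a constant factor and does not affect $\dd(\phi)$. Similarly, one must observe that model checking in K is bilinear in $|\phi|$ and the size of $\M$, which is standard, so it contributes only a polynomial factor and is dominated by the enumeration cost.
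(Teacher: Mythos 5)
Your correctness argument is fine, but the running-time analysis has a genuine gap, and it is exactly the point where your route diverges from what the statement requires. By enumerating \emph{all} Kripke structures on $N=O(k!)$ states you pay a factor $2^{N^2}=2^{O((k!)^2)}$ for the accessibility relation alone, so your algorithm runs in time $2^{O((k!)^2)+O(k!)\cdot v}\cdot \mathrm{poly}(|\phi|)$. The final step, ``all purely $k$-dependent contributions are absorbed into the $O(k!)$ factor in the exponent,'' is not valid: the claimed exponent is $O(k!)\cdot v$, and $(k!)^2$ is not $O(k!)\cdot v$ unless $v=\Omega(k!)$ --- but the whole point of this double parameterization is that $v$ is small. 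So your argument establishes fixed-parameter tractability, but a strictly weaker time bound than the one stated in the corollary (for instance, with $v$ constant and $k$ growing, your bound is $2^{\Theta((k!)^2)}$ versus the claimed $2^{O(k!)}$).

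The missing idea is that the accessibility relation need not be enumerated at all. The proof of the preceding theorem does not merely bound the number of states; it constructs a satisfying model of a \emph{canonical shape}: a tree whose root has $k$ children, each of which has $k-1$ children, each of which has $k-2$ children, and so on, with $O(k!)$ states in total. Fixing this single tree once and for all, the only remaining freedom is the valuation of the $v$ propositional letters at each of the $O(k!)$ states, which gives $2^{O(k!)\cdot v}$ candidates; checking each by bilinear model checking yields the stated $O(2^{O(k!)\cdot v}\cdot|\phi|)$ bound. Your observation about converting to negation normal form first is a reasonable bookkeeping remark, but to close the gap you must replace the enumeration over relations with this fixed-tree argument (or otherwise argue that only a $2^{O(k!)\cdot v}$-size family of candidate models needs to be examined).
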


\begin{proof}

It follows from the proof of the previous theorem that if $\phi$ is
satisfiable there exists a model of a specific type which can
satisfy it; specifically it can be satisfied in a model where the
states are connected in a tree where the root has $k$ children, each
of which has $k-1$ children, each of which has $k-2$ children and so
on. This tree has $O(k!)$ states and exhausting all possible truth
assignments to the variables in all the states and using the fact
that model checking can be performed in linear time we get the stated running
time.

\end{proof}


Let us now tackle the validity problem. The most straightforward way
to check the validity of a formula is to check whether its negation
is satisfiable. Diamond dimension is not likely to help us directly
in this case because if a formula has low diamond dimension this
does not imply that its negation also has low dimension. Therefore,
we define a dual measure called box dimension.

\begin{definition}

Let $\phi$ be a modal formula in negation normal form. Then its box dimension,
denoted by $\bd(\phi)$ is defined inductively as follows:

\begin{itemize}

\item $\bd(p)= \bd(\neg p) = 0$, if $p$ is a propositional letter

\item $\bd(\phi_1 \lor \phi_2) = \bd(\phi_1) + \bd(\phi_2)$,
$\bd(\phi_1 \land \phi_2) = \max\{\bd(\phi_1),\bd(\phi_2) \}$

\item $\bd(\Box \phi) = 1+ \bd(\phi)$,  $\bd(\Diamond \phi) = \bd(\phi)$

\end{itemize}

\end{definition}

\begin{theorem}

For any formula $\phi$ we have $\dd(\phi) = \bd(\neg \phi)$.

\end{theorem}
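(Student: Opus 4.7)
The plan is a straightforward structural induction on $\phi$, assumed to be in negation normal form, together with an explicit description of how $\neg\phi$ is rewritten into NNF by the usual De Morgan/modal dualities.

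First I would spell out the NNF-rewriting function. Call it $\nu$, so that $\nu(\phi)$ is the NNF of $\neg\phi$. Then $\nu(p) = \neg p$, $\nu(\neg p) = p$, $\nu(\phi_1 \land \phi_2) = \nu(\phi_1) \lor \nu(\phi_2)$, $\nu(\phi_1 \lor \phi_2) = \nu(\phi_1) \land \nu(\phi_2)$, $\nu(\Box \phi) = \Diamond \nu(\phi)$, and $\nu(\Diamond \phi) = \Box \nu(\phi)$. Since the measures $\dd$ and $\bd$ are only defined on NNF formulas, the statement $\dd(\phi) = \bd(\neg \phi)$ has to be read as $\dd(\phi) = \bd(\nu(\phi))$.

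Next I would do induction on $\phi$. The base cases are immediate: for $p$ and for $\neg p$, both $\dd$ and $\bd$ are $0$ on the formula and on its NNF negation. For the inductive step, I would just read off the definitions side by side. For conjunction: $\dd(\phi_1 \land \phi_2) = \dd(\phi_1) + \dd(\phi_2) = \bd(\nu(\phi_1)) + \bd(\nu(\phi_2)) = \bd(\nu(\phi_1) \lor \nu(\phi_2)) = \bd(\nu(\phi_1 \land \phi_2))$, using the induction hypothesis in the middle step and the defining clauses of $\dd$ on conjunctions and of $\bd$ on disjunctions at the ends. The disjunction case is symmetric, swapping $+$ and $\max$. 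For the modal cases: $\dd(\Box \phi) = \dd(\phi) = \bd(\nu(\phi)) = \bd(\Diamond \nu(\phi)) = \bd(\nu(\Box \phi))$, using that $\bd$ ignores a leading $\Diamond$. Similarly $\dd(\Diamond \phi) = 1 + \dd(\phi) = 1 + \bd(\nu(\phi)) = \bd(\Box \nu(\phi)) = \bd(\nu(\Diamond \phi))$.

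There is no real obstacle; the proof is essentially bookkeeping that the two inductive definitions are exact duals under the pairing $\land \leftrightarrow \lor$ and $\Box \leftrightarrow \Diamond$ that NNF negation performs. The only thing worth stating carefully at the outset is the interpretation of $\bd(\neg \phi)$, since $\neg \phi$ is not literally in NNF when $\phi$ is; once that convention is fixed, the induction is a one-line check per connective.
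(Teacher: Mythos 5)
Your proof is correct and follows essentially the same route as the paper: a structural induction on $\phi$ in which each defining clause of $\dd$ is matched with the dual clause of $\bd$ under the negation-normal-form dualities $\land\leftrightarrow\lor$ and $\Box\leftrightarrow\Diamond$. Your explicit treatment of the NNF-negation map $\nu$ is a welcome clarification of a convention the paper leaves implicit, but it does not change the argument.
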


\begin{proof}

We use induction on the length of the formula. For formulae which are just
propositional letters or literals it is obviously true. Now take a formula
$\phi$. If $\phi=\Box \psi$ then $\dd(\phi)=\dd(\psi)$. Also, $\neg \phi =
\Diamond \neg \psi$ and $\bd(\neg \phi) = \bd (\neg\psi)$. Thus, $\dd(\phi) =
\bd(\neg \phi)$ by the inductive hypothesis. The proof is similar in the other
cases.

\end{proof}

\begin{corollary}

Given a modal formula $\phi$ with $v$ variables and box dimension $\bd(\phi)=k$
we can solve the validity problem for $\phi$ in time $O(2^{O(k!)\cdot
v}\cdot|\phi|)$.

\end{corollary}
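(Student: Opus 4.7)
The plan is to reduce validity to satisfiability via negation and then invoke the preceding corollary on diamond dimension. A formula $\phi$ is valid exactly when $\neg\phi$ is unsatisfiable, so it suffices to feed a suitable form of $\neg\phi$ to the satisfiability algorithm of the previous corollary and return the complement of its answer.

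First, I would convert $\neg\phi$ into an equivalent formula $\phi'$ in negation normal form by pushing the negation inward using De Morgan's laws together with the modal dualities $\neg\Box\psi\equiv\Diamond\neg\psi$ and $\neg\Diamond\psi\equiv\Box\neg\psi$. This rewriting is purely local, runs in linear time, and satisfies $|\phi'|=O(|\phi|)$, so neither the size nor the number of propositional variables of $\phi$ changes.

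Next, I would bound the diamond dimension of $\phi'$. The immediately preceding theorem, proved by induction on formula length using exactly the same NNF-rewriting rules, asserts $\dd(\neg\phi)=\bd(\phi)=k$. Since $\phi'$ is the NNF representative of $\neg\phi$ produced by these rewritings, we obtain $\dd(\phi')=k$.

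Finally, I would apply the previous corollary to $\phi'$: satisfiability of $\phi'$ can be decided in time $O(2^{O(k!)\cdot v}\cdot|\phi'|)=O(2^{O(k!)\cdot v}\cdot|\phi|)$, and $\phi$ is valid iff $\phi'$ is unsatisfiable. There is no real obstacle here; the only point requiring care is verifying that the NNF conversion preserves the dual dimension, and this is precisely the content of the previous theorem.
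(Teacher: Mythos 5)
Your proposal is correct and matches the paper's intended argument: the corollary is meant to follow immediately from the duality theorem $\dd(\phi)=\bd(\neg\phi)$ together with the preceding satisfiability corollary, exactly as you argue (the paper simply leaves this proof implicit). Your extra care about the linear-time NNF conversion preserving size and the dimension bound is a welcome, if routine, addition.
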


\newcommand{\mw}{\textrm{mw}}

\section{Modal Width} \label{sec:width}

In this section we give another structural parameter for modal
formulae called modal width in an attempt to solve modal
satisfiability even more efficiently. We will show that
satisfiability and validity can be solved in time only singly
exponential in the modal width and $v$.

First we define inductively the function $s(\phi)$ which given a modal formula
returns a set of modal formulae. Intuitively, whether $\phi$ holds in a given
state $s$ of a Kripke structure depends on two things: the values of the
propositional variables in $s$ and the truth values of some formulae $\psi_i$
in the successor states of $s$. These formulae are informally the subformulae
of $\phi$ which appear at modal depth 1. $s(\phi)$ gives us exactly this set of
formulae.

\begin{itemize}

\item $s(p)=\emptyset$ if $p$ is a propositional letter

\item $s(\neg \phi)=s(\phi)$, $s(\phi_1\lor \phi_2)=s(\phi_1\land
\phi_2) = s(\phi_1) \cup s(\phi_2)$

\item $s(\Box \psi) = s(\Diamond \psi) = \{ \psi \}$

\end{itemize}

Now we inductively define the set $S_i(\phi)$, which intuitively corresponds to
the set of subformulae of $\phi$ at depth $i$.

\begin{itemize}

\item $S_0(\phi) = s(\phi)$

\item $S_{i+1}(\phi) = \bigcup_{\psi\in S_i(\phi)} s(\psi)$

\end{itemize}

Finally, we can now define the modal width of a formula $\phi$ at depth $i$ as
$\mw_i(\phi)=|S_i(\phi)|$ and the modal width of a formula as
$\mw(\phi)=\max_i\mw_i(\phi)$.

Before we go on, let us prove a basic observation regarding $\mw_i(\phi)$ and
$\md(\phi)$.

\begin{lemma} 
For all $i\ge \md(\phi)$ we have $\mw_i(\phi)=0$.
\end{lemma}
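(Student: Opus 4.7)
The plan is to prove this by establishing a more refined quantitative statement: that every formula appearing in $S_i(\phi)$ has modal depth at most $\md(\phi)-1-i$. Once this is in hand, the lemma follows immediately, since modal depth is non-negative, so when $i \ge \md(\phi)$ the bound $\md(\phi)-1-i \le -1$ forces $S_i(\phi)$ to be empty.

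First I would prove the auxiliary claim that for every modal formula $\phi$, every $\psi \in s(\phi)$ satisfies $\md(\psi) \le \md(\phi) - 1$, with $s(\phi) = \emptyset$ whenever $\md(\phi) = 0$. This goes by structural induction on $\phi$, following the cases of the definition of $s$. For a propositional letter $p$ (or its negation) the set $s(p)$ is empty, so the claim is vacuous. For $\neg \phi'$ the set and the modal depth are preserved, so the inductive hypothesis applies. For a conjunction or disjunction $\phi_1 \circ \phi_2$, any element of $s(\phi_1) \cup s(\phi_2)$ lies in some $s(\phi_j)$, and by induction has modal depth at most $\md(\phi_j) - 1 \le \max\{\md(\phi_1), \md(\phi_2)\} - 1 = \md(\phi) - 1$. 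For the modal cases $\Box \psi$ and $\Diamond \psi$ we have $s(\phi) = \{\psi\}$ and $\md(\phi) = 1 + \md(\psi)$, so the bound holds with equality.

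Next I would perform an induction on $i$ to establish the main intermediate claim: every $\psi \in S_i(\phi)$ satisfies $\md(\psi) \le \md(\phi) - 1 - i$. The base case $i = 0$ is exactly the auxiliary claim above. For the inductive step, any element of $S_{i+1}(\phi) = \bigcup_{\psi \in S_i(\phi)} s(\psi)$ lies in some $s(\psi)$ with $\psi \in S_i(\phi)$. By the inductive hypothesis $\md(\psi) \le \md(\phi) - 1 - i$, and by the auxiliary claim any element of $s(\psi)$ has modal depth at most $\md(\psi) - 1 \le \md(\phi) - 2 - i$, as required.

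Finally, the lemma follows by contradiction: if $i \ge \md(\phi)$ and $S_i(\phi)$ were nonempty, it would contain some $\psi$ with $\md(\psi) \le \md(\phi) - 1 - i \le -1$, which is impossible. I don't expect a real obstacle here; the only mild subtlety is keeping the base case $\md(\phi) = 0$ clean, where $s(\phi) = \emptyset$ already settles $S_0(\phi) = \emptyset$ directly without needing a negative depth bound.
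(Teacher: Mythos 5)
Your proof is correct and rests on the same key observation as the paper's: that passing from $\phi$ to an element of $s(\phi)$ strictly decreases modal depth, followed by an easy induction. The only difference is bookkeeping — you induct on $i$ with the explicit bound $\md(\psi)\le \md(\phi)-1-i$, while the paper inducts on $\md(\phi)$ — so this is essentially the paper's argument, spelled out in more detail.
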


\begin{proof}

Observe that for all formulae $\phi$ such that $\md(\phi)\ge 1$ we have
$\md(\phi) > \max_{\psi\in s(\phi)} \md(\psi)$. Using this fact the proof
follows easily by induction on $\md(\phi)$.

\end{proof}

%
%
%
%
%
%
%
%
%
%
%
%
%

\begin{theorem}

There exists an algorithm which decides the satisfiability of a
modal formula $\phi$ with $v$ variables, $\md(\phi)=d$ and
$\mw(\phi)=w$ in time $O(2^{2v+3w}\cdot d\cdot w\cdot |\phi|)$.

\end{theorem}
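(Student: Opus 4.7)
The plan is a bottom-up dynamic programming algorithm over the modal depth levels of $\phi$, exploiting the fact that at each level $i$ the set $S_{i-1}(\phi)$ has size at most $w$, so only little information about each state is relevant to the evaluation of $\phi$. Concretely, I define a \emph{depth-$i$ profile} to be a pair $(T,\sigma)$ where $T \subseteq S_{i-1}(\phi)$ records which formulas of $S_{i-1}(\phi)$ are true at a given state and $\sigma$ is an assignment to the $v$ propositional variables; there are at most $2^{v+w}$ such profiles per level. The algorithm computes, level by level from $i = d$ down to $i = 1$, the set $\mathcal{R}_i$ of \emph{realizable} profiles, i.e.\ those that actually occur at some state of some Kripke structure of the appropriate remaining depth.

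The base case $i = d$ uses the earlier lemma: $S_d(\phi) = \emptyset$, so the deepest states may be taken to have no successors, making $\Box\chi$ vacuously true and $\Diamond\chi$ false. Each $\sigma$ then determines a unique realizable profile. For the inductive step, given $\mathcal{R}_{i+1}$, I compute $\mathcal{R}_i$ by iterating over every candidate $(T,\sigma)$ and every \emph{successor summary} $(B,D)$ with $B \subseteq D \subseteq S_i(\phi)$, interpreted as $B$ being the set of formulas from $S_i(\phi)$ that hold at \emph{every} successor and $D$ the set that holds at \emph{some} successor; there are at most $3^w \le 2^{2w}$ such summaries. A summary is kept if (a) plugging $\sigma$ together with the truth values for the $\Box\chi$ and $\Diamond\chi$ subformulas dictated by $(B,D)$ into each $\psi \in S_{i-1}(\phi)$ gives true exactly when $\psi \in T$, and (b) the summary is realizable against $\mathcal{R}_{i+1}$, which reduces to checking, for every $\chi \in D$, that $\mathcal{R}_{i+1}$ contains a profile whose $T'$ includes $B \cup \{\chi\}$, together with analogous witnesses that keep the $\Box$-true set from being strictly larger than $B$. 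Finally, $\phi$ is satisfiable iff there is a root assignment $\sigma_0$ and a summary over $S_0(\phi)$, realizable against $\mathcal{R}_1$, under which $\phi$ itself evaluates to true.

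For the running time, each level costs $2^{v+w}$ profiles times $2^{2w}$ summaries times $O(|\phi|)$ for the consistency evaluation of all $\psi\in S_{i-1}(\phi)$, plus an $O(2^{v+w}\cdot w)$ realizability lookup per summary (or cheaper after a one-time precomputation of, for each $B \subseteq S_i(\phi)$, the union of $T'$-sets of profiles in $\mathcal{R}_{i+1}$ extending $B$). This gives $O(2^{2v+3w}\cdot w\cdot |\phi|)$ per level and $O(2^{2v+3w}\cdot d\cdot w\cdot |\phi|)$ overall, matching the claimed bound. The main obstacle will be proving the soundness and completeness of the profile-and-summary abstraction: one direction requires gluing together inductively-guaranteed Kripke witnesses for the chosen successor profiles into a single structure for the current profile, in the spirit of the construction used in Theorem~\ref{thm:depth_upper}; the other direction requires showing that any Kripke structure satisfying $\phi$ induces a consistent cascade of realizable profiles through $\mathcal{R}_d,\ldots,\mathcal{R}_1$, which follows by a straightforward top-down induction on level.
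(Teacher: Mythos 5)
Your overall plan coincides with the paper's proof: your sets $\mathcal{R}_i$ of realizable profiles are exactly the paper's satisfiable formulae $F(i,P',S')$, the recursion is the same bottom-up sweep from depth $d$ to the root, and your successor summaries play the role the paper assigns to truth assignments of the fresh variables of $Prop(F(i,P',S'))$; the cost accounting also lands within the claimed bound. However, the realizability check as you state it fails in two concrete places. First, you impose only \emph{lower} bounds on the witness profiles: for $\chi\in D$ you ask for a realizable profile with $T'\supseteq B\cup\{\chi\}$, and the ``analogous'' witnesses enforcing exactness of $B$ are again constrained only from below. Nothing prevents a chosen successor from satisfying some $\chi'\notin D$, in which case $\Diamond\chi'$ is true at the state you build although step (a) evaluated it as false. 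Concretely, take the requirement ``$\Diamond p$ true and $\Diamond(p\lor q)$ false'' with $p,\ p\lor q$ at the next level: it is unsatisfiable in K, yet your check accepts $B=\emptyset$, $D=\{p\}$, because some realizable profile contains $p$ --- but every such profile also contains $p\lor q$. The repair is to demand that every chosen successor profile satisfy $B\subseteq T'\subseteq D$ (this is what makes $(B,D)$ the \emph{exact} all-successors and some-successor sets); completeness survives because in a genuine model every successor's profile does lie between these two sets, and the running time is unaffected.

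Second, restricting to summaries with $B\subseteq D$ silently assumes the state has at least one successor, so dead-end states above the deepest level cannot be represented: with no successors every $\Box\chi$ is true and every $\Diamond\chi$ is false, which corresponds to $B=S_i(\phi)$, $D=\emptyset$. For $\phi=\Diamond\Box(x\land\neg x)$ your algorithm rejects (no realizable deeper profile makes $x\land\neg x$ true, so $\Box(x\land\neg x)$ can never enter $B$), although $\phi$ is satisfiable by a root whose single successor has no successors. You must allow this one extra ``no successors'' summary, with no witness obligations, at every level and not only at $i=d$. Note that the paper avoids both pitfalls by treating $\Diamond$ as $\neg\Box\neg$ and checking the universal constraints $S''$ and the existential constraints $S^{*}$ against \emph{exact} next-row entries, so no spurious diamonds can arise, and when $S^{*}=\emptyset$ no successors are required at all. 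With these two corrections your soundness/completeness induction and the stated $O(2^{2v+3w}\cdot d\cdot w\cdot|\phi|)$ bound go through.
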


\begin{proof}

%
%
%
%
%
%
%
%
%
%
%
%
%
%
%

We will need to use a function $Prop(\phi)$ which, given a modal formula
$\phi$, returns a propositional formula which corresponds to $\phi$ with all
modal subformulae replaced by new propositional variables. $Prop(\phi)$ can be
inductively defined as follows:

\begin{itemize}

\item $Prop(p)=p$ if $p$ is a propositional letter.

\item $Prop(\phi_1\lor \phi_2) = Prop(\phi_1) \lor Prop(\phi_2)$,
  $Prop(\phi_1\land\phi_2) = Prop(\phi_1) \land Prop(\phi_2)$,
  $Prop(\neg \phi_1) = \neg Prop(\phi_1)$

\item $Prop(\Box \phi_1) = q_j$, where $q_j$ is
a new propositional letter.

\end{itemize}

Notice that once again we consider $\Diamond \phi$ as shorthand
for $\neg \Box \neg \phi$.

Let $P=\{p_1, p_2, \ldots, p_v\}$ be the set of propositional
variables appearing in $\phi$. For all $i\in \{0,\ldots,d-1\}$, for
all $P'\subseteq P$ and for all $S'\subseteq S_i(\phi)$ we define
the formula $F(i,P',S')=\left(\bigwedge_{p_i\in P'}p_i\right) \land
\left(\bigwedge_{p_i\in P\setminus P'} \neg p_i \right) \land
\left(\bigwedge_{\psi_i\in S'}\psi_i\right) \land
\left(\bigwedge_{\psi_i\in S_i(\phi)\setminus S'} \neg \psi_i
\right) $. Clearly there are at most $2^{v+w}d$ formulae
$F(i,P',S')$ defined and for each one of these we will compute
whether it is satisfiable or not using dynamic programming. We will
use a boolean matrix $A(i,P',S')$ of size $2^{v+w}d$ to store the results.

First, we have $S_d(\phi)=\emptyset$. It is not hard to see that all
formulae $F(d,P',\emptyset)$ are indeed satisfiable, so we initialize the
corresponding entries in $A$ to True. Suppose now that for some $i$ we have
filled out completely all entries $A(i+1,P',S')$. We will show how to fill out
any position in row $i$, say position $A(i,P',S')$. The crucial part now is that
if we consider the formula $Prop(F(i,P',S'))$, it will have some new variables
$q_i$ which correspond to modal subformulae which all appear in $S_{i+1}(\phi)$.

The formula $Prop(F(i,P',S'))$ has at most $v+w$ variables. It is not hard to see
that if $F(i,P',S')$ is satisfiable, then $Prop(F(i,P',S'))$ is also satisfiable,
so our first step is to check this. The truth assignments for the $v$ variables
are easy to infer, therefore we only need to go through the $2^w$ possible assignments
for the new variables. For each satisfying assignment we find we then need to
check if a model that satisfies $F(i,P',S')$ can be built from it.

So, suppose that $Q$ is the set of new variables, and we have found an assignment
which sets the variables of $Q'\subseteq Q$ to true and the rest to false and
satisfies $Prop(F(i,P',S'))$. Each variable $q_j$ of $Q$ corresponds to a formula
$\Box \phi_j$ with $\phi_j\in S_{i+1}(\phi)\cup P$. If $q_j\in Q'$ we must make sure that
$\phi_j$ is true in all successors of the state $s$ where $F(i,P',S')$ will hold, in
the model we are building. Let $S''\subseteq S_{i+1}(\phi)\cup P$ be the set of formulae $\phi_j$
which we conclude that must hold in all successors of $s$ in this way.

If $q_j\not\in Q'$ we have that $\neg \Box \phi_j$ must
hold in $s$, thus $s$ must have a successor where $\neg \phi_j$ is true, or equivalently
$\phi_j$ is false. Let $S^{*}\subseteq S_{i+1}(\phi)\cup P$ be the set of formulae $\phi_j$
for which we conclude that they must be false in some successor of $s$ in this way.

To decide if it is possible to build appropriate successors to $s$
so that all these conditions are satisfied, we look at row $i+1$ of $A$. Specifically
we consider the set of entries $A(i+1,P',S')$ such that $S''\subseteq S'\cup P'$ and
$A(i+1,P',S')=T$. Informally, these correspond to formulae which are satisfiable
(because the corresponding entry is set to true) and which also can serve as successors
to $s$ without violating the conditions of $S''$, that is, in any state where they hold
all formulae which we need to be true in all successors of $s$ are indeed true. Now, we
simply check if for each $q_j\in S^{*}$ there exists an entry in the set we have
selected so far with $q_j\not\in S'\cup P'$. If this is the case we can conclude that $F(i,P',Q')$
is satisfiable and set the corresponding entry of $A$ to True, otherwise we conclude that
no satisfying model can be built from the assignment we get from $Q$, even though
$Prop(F(i,P',S'))$ is satisfied. This whole process of computing $S''$ and $S^{*}$
and checking through row $i+1$ of $A$ can be performed in time $O(w\cdot 2^{v+w}|\phi|)$.

To decide if the initial formula $\phi$ is satisfiable, we compute $Prop(\phi)$ and
perform the same process: for every satisfying assignment of $Prop(\phi)$ we look at
corresponding entries of row $0$ of $A$ to see if a model for $\phi$ can be built. The
total time for this algorithm is $O(2^{3w+2v}wd|\phi|)$, because for each of the at most
$2^{v+w}d$ entries of $A$ we need to check through at most $2^w$ assignments and for each
we spend at most $O(w\cdot 2^{v+w}|\phi|)$.

\end{proof}

\section{Conclusions and Open Problems}

In this paper we defined and studied several modal formula
complexity measures and investigated how each can be used to attack
cases of modal satisfiability. Our results show that proving
fixed-parameter tractability is only a first step in such problems,
because the dependence on the parameters can vary significantly and
some parameters offer much better algorithmic footholds than others.

It is worthy of remark that the measures of formula complexity we have
discussed are not directly comparable; for example it is possible to construct
a formula with small modality depth and very high modal width, or vice-versa.
In this sense it is not possible to infer solely from our results which formula
complexity measure is the ``best'', since each corresponds to a different
family of modal formulae. However, our results can be seen as a first attempt
at drawing a complexity ``map'' for different modal formula parameters, looking
for areas where satisfiability becomes more or less tractable. This perspective
creates a nice connection between this work and for example the research area
of graph widths, where the complexity of model checking problems on graphs is
explored in different graph families depending on a graph complexity measure.
This is a well-developed area whose insights may be applicable and helpful in
the study of the problems of this paper. (For a summary of the current
complexity ``map'' for graph width parameters see Figure 8.1\ in \cite{DBLP:journals/eccc/Grohe07})

Possible future directions are the investigation of yet more natural formula
complexity measures and attempting to improve the running times or to show good
lower bounds for the already known measures. Finally, extending our results to
other modal logics, such as modal logics where Kripke structures are required
to be reflexive or transitive (e.g. S4) would be an interesting next step.

\bibliographystyle{abbrv} \bibliography{paper}


\end{document}